\newtheorem{theorem}{Theorem}[section]
\newtheorem{lemma}[theorem]{Lemma}
\newtheorem{proposition}{Proposition}[section]
\theoremstyle{definition}
\newtheorem{definition}[theorem]{Definition}
\newtheorem{example}[theorem]{Example}
\theoremstyle{remark}
\numberwithin{equation}{section}
\newcommand{\la}{\lambda}
\begin{document}

\title[On classes of local unitary transformations]
{On classes of local unitary transformations}
\author{Naihuan Jing}
\address{Department of Mathematics,
   North Carolina State Univer\-sity,
   Ra\-leigh, NC 27695-8205, USA}
\address{School of Sciences, South China University of Technology,
Guangzhou 510640, China}
\email{jing@math.ncsu.edu}
\keywords{Group theory, quantum computation, induced
representations} \subjclass{Primary: 81-08; Secondary: 81R05, 22E46,
22E70}

\begin{abstract}
We give a one-to-one correspondence between classes of density
matrices under local unitary invariance and the double cosets of
unitary groups. We show that the interrelationship among classes of
local unitary equivalent multi-partite mixed states is independent
from the actual values of the eigenvalues and only depends on the
multiplicities of the eigenvalues. The interpretation in terms of
homogeneous spaces of unitary groups is also discussed.

PACS numbers: 03.67.Mn, 02.20.Hj, 02.10.Ox
\end{abstract}

\maketitle

\section{Introduction} \label{S:intro}

Theoretic foundation of quantum computation and quantum information
\cite{NC} has been postulated through the language of quantum
mechanics. At the beginning of the development of quantum mechanics
Schr\"oginger pointed out the unitary freedom of the quantum system
\cite{S, U}, which is one of the spotlights in the whole theory and
has been used in many applications in later developments. In
\cite{HJW} the freedom of purification of quantum states was studied
with help of Schmidt decomposition. Furthermore the probability
distributions given by a given density matrix are characterized by
majorization \cite{N}. These results have played an important role
in quantum statistics, quantum computation and quantum information.

Quantum entanglement is one of the key issues in quantum
computation, and local unitary invariance is an importance aspect in
quantum entanglement and their applications. For instance, several
fast algorithms are discovered based on special properties of local
invariance and quantum entanglements. The maximum entanglement can
also be achieved via local unitary action. Local unitary
transformations have also been used for fractional entanglement and
in other properties (cf. \cite{HHH}). In various investigations of
local unitary equivalence lots of efforts have been made in seeking
possible invariants \cite{AFG, TLT, ZSB, ACFW}. There have also been
results on non-local unitary equivalence \cite{DC} in exploring
feasible quantum gates. In \cite{YLF} a geometric class was proposed
to study the problem for two partite quantum states and it has been
shown that two quantum states are equivalent if and only if their
representation classes are the same. In \cite{FJ} we have proposed a
new operational method to study local unitary equivalence of density
matrices and the relationship with separability.

It is clear that in classification of equivalent density matrices
one first needs to see whether two density operators have the same
eigenvalues or not. Therefore one should separate the easy task of
comparing eigenvalues from the problem and focus on other invariants
of unitary operations. In the current paper we apply this strategy
to a larger quantum system consisting of several quantum sub-systems
and study their unitary properties from the viewpoints of both
global and local pictures.

We first look at density matrices with
 specified
spectrum of eigenvalues and study the properties of the quantum
systems using group-theoretic methods. The local unitary properties
are investigated with the help of fixed point subgroups of the
concerned Lie group and we then set up a one-to-one correspondence
between classes of local unitary equivalences and double cosets of
the unitary group $U(n)$ by the additive and multiplicative Young
subgroups. Next we recast the space of the equivalence classes in
terms of induced representations of the Young subgroups and show
their deep connections with homogeneous spaces of unitary groups.

One of our main results shows that local unitary equivalence depends
only on the multiplicities of the eigenvalues and is independent
from the actual values of the eigenvalues. This means that with
given spectrum of the density matrices the local unitary equivalence
is determined by the combinatorics of the quantum states. We hope
that the combinatorics and invariant theory will shed light in
further investigation of local unitary properties and entanglement
of quantum states.

\section{Local unitary equivalence} \label{S:equivalence}

Let $H$ be a Hilbert space affording the quantum system. If the
quantum system is in a number of normalized quantum states
$|\psi_i\rangle$ with probability $p_i$, then the density operator
for the quantum system is the convex combination
$$ \rho=\sum_ip_i|\psi_i\rangle\langle\psi_i|,$$
where $\sum_ip_i=1$. Equivalently density operators are
characterized as non-negative operators on the Hilbert space with
unit trace.

Among all physically possible quantum systems afforded by $H$, one
needs to judge whether two systems are distinct or not. To study the
equivalence of various quantum systems on $H$, one only needs to
consider the unitary linear group $U(n)$ over $\mathbb C$. Suppose
two density matrices $\rho_i$ have the same set of eigenvalues and
their sets of multiplicities for each eigenvalue are also the same.
Let the corresponding (orthonormal) eigenstates be
$|\phi(\la_i^{(j)})\rangle$ and $|\psi(\la_i^{(j)})\rangle$
respectively, where $j=1, \cdots , m_i$, the multiplicity of the
eigenvalue $\la_i$. Then we can write
\begin{align*}
\rho_1&=\sum_{ij}\la_i|\phi(\la_i^{(j)})\rangle\langle
\phi(\la_i^{(j)})|,\\
\rho_2&=\sum_{ij}\la_i|\psi(\la_i^{(j)})\rangle\langle
\psi(\la_i^{(j)})|.
\end{align*}
We define $g$ to be the linear transformation on $H$ sending
$|\phi(\la_i^{(j)})\rangle\mapsto |\psi(\la_i^{(j)})\rangle$, then
$g\in U(n)$ due to orthogonality of the eigenstates. Then one easily
checks that
$$\rho_2=g\rho_1g^{-1}=g\rho_1g^{\dagger}.$$
In other words density matrices with similar spectral decomposition
are unitary equivalent and all physical features of the quantum
system are captured by properties of the unitary group $U(n)$.

The above discussion can also be viewed in terms of group action.
Let $D(n)$ be the set of density matrices on the Hilbert space $H$
of dimension $n$. The evolution of the density operator $\rho$ is
given by $\rho\longrightarrow g\rho g^{\dagger}$, where $g=g(t) \in
U(n)$ and $\dagger$ is transpose and conjugation. Clearly $g\rho
g^{\dagger}$ is positive and $tr(g\rho g^{\dagger})=tr(\rho)=1$. We
will adopt the standard convention in group theory to write
$hgh^{-1}=g^{h}$ for $g, h\in G$.

\begin{lemma} The kernel of the action $U(n)\times D(n)\longrightarrow D(n)$
is $\{e^{i\theta}I|\theta\in \mathbb R\}\leq U(n)$.
\end{lemma}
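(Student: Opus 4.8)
The plan is to prove the two inclusions separately. The containment $\{e^{i\theta}I \mid \theta \in \mathbb R\} \subseteq \ker$ is immediate: for $g = e^{i\theta}I$ one computes $g\rho g^{\dagger} = e^{i\theta}\rho\, e^{-i\theta} = \rho$ for every $\rho \in D(n)$, so each such $g$ acts trivially on all of $D(n)$ and hence lies in the kernel of the action.

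For the reverse inclusion, suppose $g \in U(n)$ lies in the kernel, so that $g\rho g^{\dagger} = \rho$ for all $\rho \in D(n)$. Since $g$ is unitary we have $g^{\dagger} = g^{-1}$, and multiplying the defining identity on the right by $g$ yields $g\rho = \rho g$ for every density matrix $\rho$; thus $g$ commutes with all of $D(n)$. The crucial observation is that $D(n)$ contains every pure state $|\psi\rangle\langle\psi|$ with $\|\psi\| = 1$, since each such rank-one projection is positive and of unit trace. Feeding such a $\rho$ into the commutation relation $g\,|\psi\rangle\langle\psi| = |\psi\rangle\langle\psi|\,g$ and evaluating both sides on $|\psi\rangle$ gives $g|\psi\rangle = \langle\psi|g|\psi\rangle\,|\psi\rangle$, so that every unit vector, and hence every nonzero vector, is an eigenvector of $g$.

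The main step is then the standard linear-algebra fact that an operator for which every vector is an eigenvector must be a scalar multiple of the identity: if $|\psi\rangle$ and $|\phi\rangle$ were eigenvectors for distinct eigenvalues, the sum $|\psi\rangle + |\phi\rangle$ could not be an eigenvector, forcing all eigenvalues to coincide. Writing $g = cI$ and invoking $g^{\dagger}g = I$ gives $|c|^2 = 1$, i.e. $c = e^{i\theta}$ for some $\theta \in \mathbb R$, which completes the reverse inclusion and the proof.

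I expect the only genuine subtlety to be the justification that the pure states alone suffice to pin down $g$; once it is noted that they are themselves density matrices, the computation above is routine. An alternative, essentially equivalent, route would be to observe that the density matrices affinely span the real space of trace-one Hermitian operators, so that commuting with all of them already forces $g$ into the centralizer of every Hermitian matrix and hence into the scalars; the pure-state calculation is, however, the most direct way to reach the conclusion.
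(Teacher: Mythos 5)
Your proof is correct, but it takes a genuinely different route from the paper's. The paper argues in two stages with specially chosen density matrices: first it feeds in a diagonal density matrix with distinct eigenvalues to force $g$ to be diagonal, and then invokes commutation with (generalized) permutation-type matrices to force all diagonal entries of $g$ to coincide. You instead use the full family of pure states $|\psi\rangle\langle\psi|$ at once: the commutation relation evaluated on $|\psi\rangle$ shows every unit vector is an eigenvector of $g$, and the standard lemma that an operator with every vector as an eigenvector is scalar finishes the argument. Your route is arguably cleaner and more robust --- it sidesteps the basis-dependent computation entirely, and it avoids the slightly delicate point in the paper that a generalized permutation matrix is not itself positive semidefinite and hence not literally a density matrix (the paper's second step really requires a density matrix, such as $\tfrac12(|i\rangle+|j\rangle)(\langle i|+\langle j|)$, that plays the same role). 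What the paper's approach buys is explicitness: it exhibits a small, concrete finite set of density matrices whose stabilizer already cuts the kernel down to the scalars, which is occasionally useful information in its own right. You also supply the forward inclusion ($e^{i\theta}I$ acts trivially), which the paper omits as obvious; that is fine either way.
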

\begin{proof}
Suppose $g\rho g^{\dagger}=\rho$ for all density matrices $\rho$. If
$\rho$ is any diagonal matrix with distinct eigenvalues, then $g$
must be diagonal by computation. A generalized (or signed)
permutation matrix is one that each row or column has only one and
only non-zero entry and this non-zero entry is $1$ or $-1$. One
notes that any matrix commutes with a generalized permutation matrix
must be diagonal. To show that $g=aI$ with $a=e^{i\theta}$ a complex
number of unit modulus, we can take positive generalized permutation
matrix, then all diagonal entries of $g$ must be equal.
\end{proof}

For $\rho\in D(n)$ we define the {\it invariant subgroup} $U_{\rho}$
by $U_{\rho}=\{g\in U(n)| \, g\rho g^{\dagger}=\rho\}$. As $g\in
U(n)$ is unitary: $gg^{\dagger}=g^{\dagger}g=I$, we have immediately
that $g\rho g^{\dagger}=\rho$ iff $g\rho=\rho g$. The next statement
 is immediate from definition.

\begin{lemma} \label{L:conj}
For any $h\in GL(n)$ one has
$U_{\rho^h}=hU_{\rho}h^{\dagger}=(U_{\rho})^{h}$.
\end{lemma}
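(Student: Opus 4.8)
The plan is to reduce everything to the commutation characterization recorded just before the statement: for $g\in U(n)$ one has $g\rho g^{\dagger}=\rho$ if and only if $g\rho=\rho g$. Once the invariant subgroup is described purely by the commutation relation $g\rho=\rho g$, the whole lemma becomes a matter of tracking how that relation transforms under conjugation by $h$, so I would not expect to manipulate the $\dagger$-action directly at all.

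Concretely, I would unwind the definitions using the convention $\rho^{h}=h\rho h^{-1}$. A unitary $g$ lies in $U_{\rho^{h}}$ precisely when $g$ commutes with $h\rho h^{-1}$; conjugating this relation by $h^{-1}$ (multiply on the left by $h^{-1}$ and on the right by $h$) shows that $g$ commutes with $h\rho h^{-1}$ if and only if $h^{-1}gh$ commutes with $\rho$. For $h\in U(n)$ the map $g\mapsto h^{-1}gh$ preserves $U(n)$, so this gives the clean equivalence $g\in U_{\rho^{h}}\iff h^{-1}gh\in U_{\rho}$, that is $U_{\rho^{h}}=hU_{\rho}h^{-1}=(U_{\rho})^{h}$, exactly as the word ``immediate'' suggests. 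The remaining identification $(U_{\rho})^{h}=hU_{\rho}h^{\dagger}$ is then automatic, since a unitary $h$ satisfies $h^{\dagger}=h^{-1}$ and the two sets coincide verbatim, both sitting inside $U(n)$.

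The main obstacle I anticipate is precisely the passage to a \emph{general} $h\in GL(n)$, where $h^{\dagger}\neq h^{-1}$. When $h$ is not unitary, $h g h^{-1}$ need not remain unitary, so $hU_{\rho}h^{-1}$ and $hU_{\rho}h^{\dagger}$ are a priori distinct sets, neither of which obviously lies in $U(n)$; the equivalence $g\in U_{\rho^{h}}\iff h^{-1}gh\in U_{\rho}$ can then break, because membership in $U(n)$ is no longer preserved by the conjugation. I would therefore be careful to either restrict to the unitary case that the subsequent development actually uses, or else reinterpret $U_{\rho^{h}}$ as the full stabilizer so that the bookkeeping of whether a conjugate of a unitary stays unitary is handled explicitly. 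The commutation step itself is routine; it is this matching of $h^{\dagger}$ with $h^{-1}$ that carries all the genuine content.
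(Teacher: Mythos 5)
Your proof is correct and matches the paper's intent exactly: the paper offers no argument at all, declaring the lemma ``immediate from definition,'' and your unwinding via the commutation characterization $g\rho=\rho g$ is precisely that immediate argument. Your caveat about general $h\in GL(n)$ is well taken --- as literally stated the lemma conflates $h^{-1}$ with $h^{\dagger}$ and the conjugate $hgh^{-1}$ of a unitary need not be unitary, but the paper only ever invokes the lemma with $h\in U(n)$ (in the proof of Proposition \ref{P:conj}), where your argument goes through verbatim.
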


Let $\rho$ be a density matrix of size $n\times n$ and thus is
diagonalizable due to hermiticity. According to basic theory in
linear algebra \cite{L} there exists a unitary matrix $g\in U(n)$
such that
$$\rho=g\, diag(a_1I_{\la_1}, \cdots, a_lI_{\la_l})g^{\dagger},$$
where $\la_i$ are non-negative integers such that
$\la_1+\cdots+\la_l=n$. We can further suppose that $\la_i$ are
arranged in a descending order: $\la_1\geq\cdots\geq\la_l>0$, thus
$\la$ is a partition of $n$. Here $l$ is the number of parts of
$\la$.

\begin{definition} We say a density matrix $\rho$ is of type $\la$,
a partition of $n$, if $\rho$ has eigenvalue multiplicities: $\la_1,
\cdots, \la_l$, where $\la_1+\cdots +\la_l=n$.
\end{definition}

Two density matrices are {\it equivalent} if they are related by a
unitary transformation: $\rho_1=g\rho_2 g^{\dagger}$, $g\in U(n)$.
This is in agreement with the equivalence relation given by the
group action of $U(n)$ on $D(n)$. Thus the invariant subgroups of
equivalent density matrices are conjugate in $U(n)$. We will call
the set of equivalent density matrices of type $\la$ the {\it
equivalent class} of type $\la$, and denote the class by $[\la]$.

\begin{proposition} \label{P:conj} If
$\rho=diag(a_1I_{\la_1}, \cdots, a_lI_{\la_l})$ is a density matrix,
then $U_{\rho}=U(\la_1)\times\cdots  \times U(\la_l)$. More
generally if $\rho$ is of type $\la=(\la_1, \cdots, \la_l)$, then
$U_{\rho}=gU(\la_1)\times\cdots \times U(\la_l)g^{\dagger}$ for some
$g\in U(n)$.
\end{proposition}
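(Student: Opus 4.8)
The plan is to prove the first (diagonal) case directly and then deduce the general case by applying Lemma~\ref{L:conj}. For the diagonal case, write $\rho=diag(a_1I_{\la_1},\cdots,a_lI_{\la_l})$ with the $a_i$ pairwise distinct, and recall from the remark preceding the statement that $g\in U_\rho$ iff $g\rho=\rho g$, i.e. $g$ commutes with $\rho$. So the task is purely linear-algebraic: determine the commutant of $\rho$ inside $U(n)$. Decompose $H=\mathbb C^n$ as the orthogonal direct sum of eigenspaces $H=V_1\oplus\cdots\oplus V_l$, where $V_i$ is the $a_i$-eigenspace of $\rho$ and $\dim V_i=\la_i$; with respect to the standard basis these are exactly the coordinate blocks. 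A matrix $g$ commuting with $\rho$ must preserve each eigenspace $V_i$ (this is the standard argument: if $\rho v=a_iv$ then $\rho(gv)=g\rho v=a_i(gv)$, so $gv\in V_i$), hence $g$ is block-diagonal, $g=diag(g_1,\cdots,g_l)$ with $g_i$ acting on $V_i$. Conversely any block-diagonal matrix of this shape commutes with $\rho$ since $\rho$ is scalar on each block. Finally $g\in U(n)$ iff each block $g_i$ is unitary on $V_i$ (because $gg^{\dagger}=diag(g_1g_1^{\dagger},\cdots,g_lg_l^{\dagger})$). Therefore $U_\rho=U(\la_1)\times\cdots\times U(\la_l)$, embedded as block-diagonal matrices; this proves the first assertion.

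For the general case, suppose $\rho$ is of type $\la=(\la_1,\cdots,\la_l)$. By the discussion in the text (basic linear algebra / spectral theorem for Hermitian operators) there is $g\in U(n)$ with $\rho=g\,diag(a_1I_{\la_1},\cdots,a_lI_{\la_l})\,g^{\dagger}=\rho_0^{\,g}$ in the notation $\rho_0^{\,g}=g\rho_0g^{\dagger}$, where $\rho_0=diag(a_1I_{\la_1},\cdots,a_lI_{\la_l})$. Now apply Lemma~\ref{L:conj} with $h=g\in U(n)\subseteq GL(n)$: it gives $U_{\rho}=U_{\rho_0^{\,g}}=gU_{\rho_0}g^{\dagger}=(U_{\rho_0})^g$. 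Substituting the result of the first part, $U_{\rho_0}=U(\la_1)\times\cdots\times U(\la_l)$, yields $U_{\rho}=g\bigl(U(\la_1)\times\cdots\times U(\la_l)\bigr)g^{\dagger}$ for this particular $g\in U(n)$, which is exactly the stated conclusion.

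The argument has no real obstacle; the only point that needs a moment's care is the claim that $g$ commuting with $\rho$ forces $g$ to be \emph{block}-diagonal rather than merely diagonal. This is precisely where the hypothesis that the $a_i$ are distinct (while each has multiplicity $\la_i$, possibly $>1$) enters: distinctness guarantees the eigenspace decomposition has exactly $l$ summands, so $g$ need only respect that coarser decomposition, and within a single eigenspace $g_i$ is an arbitrary unitary. One should also note in passing that the embedding of $U(\la_1)\times\cdots\times U(\la_l)$ into $U(n)$ is the standard block-diagonal one, so that the product is literally a subgroup of $U(n)$ and not merely abstractly isomorphic to one; this is what makes the conjugation $(U_{\rho_0})^g$ in the second part well-defined as a subgroup of $U(n)$.
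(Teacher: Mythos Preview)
Your proof is correct and follows exactly the paper's approach: the diagonal case by direct computation of the commutant (which the paper leaves as ``direct computation'' and you spell out via the eigenspace decomposition), and the general case by reducing to the diagonal one via Lemma~\ref{L:conj}. The only difference is level of detail, not strategy.
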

\begin{proof} If  $\rho=diag(a_1I_{\la_1}, \cdots, a_lI_{\la_l})$,
it follows from direct computation that
$U_{\rho}=U(\la_1)\times\cdots  \times U(\la_l)$. Then for a general
density matrix $\rho$ of type $\la$, we have that
$\rho=gdiag(a_1I_{\la_1}, \cdots, a_lI_{\la_l})g^{\dagger}$ ($g\in
U=U(n)$), then the result follows from Lemma \ref{L:conj}.
\end{proof}

The classes of density matrices can be characterized by the group
action.

\begin{theorem} Let $[\la]$ be the equivalent class of density
matrices of type $\la$, then $[\la]\simeq U/U_{\rho}$, where $\rho$
is some density matrix of type $\la$.
\end{theorem}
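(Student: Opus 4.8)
The statement is an instance of the orbit--stabilizer correspondence, so the plan is to exhibit the natural bijection and verify that it respects the group action. First I would record, using the spectral description preceding Proposition~\ref{P:conj}, that every density matrix of type $\la$ has the form $g\,diag(a_1 I_{\la_1},\dots,a_l I_{\la_l})\,g^{\dagger}$ with $g\in U=U(n)$; fixing one such $\rho$, the class $[\la]$ of density matrices equivalent to $\rho$ is precisely the orbit $U\cdot\rho=\{g\rho g^{\dagger}\mid g\in U\}$ under the evolution action $\sigma\mapsto k\sigma k^{\dagger}$, and the stabilizer of $\rho$ for this action is the invariant subgroup $U_{\rho}$. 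By Proposition~\ref{P:conj}, $U_{\rho}$ is a closed subgroup of the compact group $U$ (a conjugate of $U(\la_1)\times\cdots\times U(\la_l)$), so $U/U_{\rho}$ is a well-defined homogeneous space.

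Next I would define $\Phi\colon U/U_{\rho}\to[\la]$ by $\Phi(gU_{\rho})=g\rho g^{\dagger}$. It is well defined because $gU_{\rho}=hU_{\rho}$ gives $h=gu$ with $u\in U_{\rho}$, whence $h\rho h^{\dagger}=g(u\rho u^{\dagger})g^{\dagger}=g\rho g^{\dagger}$. Surjectivity is just the description of $[\la]$ as an orbit. For injectivity, if $g\rho g^{\dagger}=h\rho h^{\dagger}$ then, since $h^{\dagger}=h^{-1}$, the element $h^{-1}g$ satisfies $(h^{-1}g)\rho(h^{-1}g)^{\dagger}=\rho$, so $h^{-1}g\in U_{\rho}$ and $gU_{\rho}=hU_{\rho}$. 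Finally $\Phi$ intertwines left translation on $U/U_{\rho}$ with the action of $U$ on $[\la]$, so it is an isomorphism of $U$-sets, which is the assertion $[\la]\simeq U/U_{\rho}$.

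If the identification is wanted at the level of topological (or smooth homogeneous) spaces, as the introduction's discussion of homogeneous spaces suggests, one further step is needed: give $[\la]\subset D(n)$ the subspace topology, note that $g\mapsto g\rho g^{\dagger}$ is continuous and factors through the continuous bijection $\Phi$, and then invoke compactness of $U/U_{\rho}$ together with Hausdorffness of $[\la]$ to conclude $\Phi$ is a homeomorphism; since $U_{\rho}$ is in addition a closed Lie subgroup, $\Phi$ is actually a diffeomorphism. This last upgrade is the only point that is not completely formal --- the set-theoretic bijection itself is a routine application of orbit--stabilizer, and I do not anticipate any genuine obstacle there.
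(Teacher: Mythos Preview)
Your argument is correct and is essentially the same orbit--stabilizer proof the paper gives: identify $[\la]$ as the $U$-orbit of a fixed $\rho$ and check that $gU_{\rho}\mapsto g\rho g^{\dagger}$ is a well-defined $U$-equivariant bijection. Your final paragraph upgrading the bijection to a homeomorphism (and diffeomorphism) via compactness of $U/U_{\rho}$ is a useful supplement that the paper leaves implicit.
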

\begin{proof} Let $U=U(n)$. The action $U\times [\la]\longrightarrow [\la]$
given by $(g, \rho)\mapsto g\rho g^{\dagger}$ is transitive. As the
class $[\la]=\{\rho^g|g\in U\}$ for a fixed density matrix $\rho$,
we see that
$$ \rho^{g_1}= \rho^{g_2} \Longleftrightarrow g_1g_2^{-1}\in
U_{\rho}, $$ therefore $U/U_{\rho}\simeq [\la(\rho)]$. It follows
from Proposition \ref{P:conj} that
$U/U_{\rho}=U/gU_{\la(\rho)}g^{-1}\simeq U/U_{\la(\rho)}$
\end{proof}

We now consider local unitary equivalence. Let $H_i$ be two Hilbert
spaces with dimensions $n_i$. We say two density operators $\rho_i$
on the space $H=H_1\otimes H_2$ are equivalent under local
transformation iff $\rho_1=(g_1\otimes g_2)\rho_2 (g_1\otimes
g_2)^{\dagger} $ for some $g_i\in U(n_i)\leq End(H_i)$. We recall
that $\rho_i$ are (globally) equivalent if $\rho_1=g\rho_2
g^{\dagger}$ for $g\in U(n)\leq End(H)$, where $n=n_1n_2$. More
generally, for multi-partite cases the local unitary group is
 $U_{\bf n}=U(n_1)\otimes \cdots \otimes U(n_r)$, where
$n_i=dim(H_i)$.

\begin{theorem} Let $H=H_1\otimes \cdots \otimes H_r$ be the global
Hilbert space with $dim(H_i)=n_i$, and $U_{\bf n}$ as above. Then
the type $\la$ equivalent multi-partite mixed states under local
equivalence are in one-to-one correspondent to double cosets of the
unitary group $U(n)$ by the Young subgroups: $U_{\bf n}\backslash
U/U_{\la}$, where $U_{\la}=U(\la_1)\times \cdots \times U(\la_l)$.
Moreover, the representative of the double coset determined by the
density matrix $\rho$ is given by the matrix of the orthonormal
eigenstates of $\rho$.
\end{theorem}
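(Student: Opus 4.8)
The plan is to exhibit one explicit map $\Phi$ from density matrices of type $\la$ to the double coset space $U_{\bf n}\backslash U/U_\la$ and then check that $\Phi$ is well defined, constant on local equivalence classes, injective on those classes, and surjective. First I would fix a diagonal model $\rho_\la = diag(a_1I_{\la_1}, \cdots, a_lI_{\la_l})$ with distinct positive $a_i$ satisfying $\sum_i\la_i a_i = 1$; by Proposition \ref{P:conj} its invariant subgroup is precisely $U_{\rho_\la} = U(\la_1)\times\cdots\times U(\la_l) = U_\la$. Since each density matrix $\rho$ of type $\la$ with this spectrum is Hermitian, it diagonalizes as $\rho = g\rho_\la g^{\dagger}$ for some $g\in U = U(n)$ whose columns are orthonormal eigenstates of $\rho$; moreover if $g$ and $\tilde g$ both do this, then $\tilde g^{-1}g$ commutes with $\rho_\la$, so $\tilde g^{-1}g\in U_\la$ and the eigenstate matrix of $\rho$ is determined up to right multiplication by $U_\la$. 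I therefore set $\Phi(\rho) = U_{\bf n}\, g\, U_\la$; because the ambiguity in $g$ already lies in $U_\la$, this is a well-defined function of $\rho$.

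Next I would connect $\Phi$ with local equivalence. A second type $\la$ density matrix $\rho'$ of the same spectrum is locally equivalent to $\rho$ precisely when $\rho' = u\rho u^{\dagger}$ for some $u$ in the local unitary group $U_{\bf n}\leq U(n)$. If this holds, then $\rho' = (ug)\rho_\la(ug)^{\dagger}$, so $ug$ is an eigenstate matrix of $\rho'$ and $\Phi(\rho') = U_{\bf n}(ug)U_\la = U_{\bf n}gU_\la = \Phi(\rho)$. Conversely, if $\Phi(\rho) = \Phi(\rho')$ then the eigenstate matrices can be chosen so that $g' = ugv$ with $u\in U_{\bf n}$ and $v\in U_\la$; since $v\rho_\la v^{\dagger} = \rho_\la$ this gives $\rho' = g'\rho_\la (g')^{\dagger} = ug\rho_\la g^{\dagger}u^{\dagger} = u\rho u^{\dagger}$, so $\rho$ and $\rho'$ are locally equivalent. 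Hence $\Phi$ factors through the set of local equivalence classes of type $\la$ matrices and is injective on it.

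For surjectivity, given an arbitrary double coset $U_{\bf n}gU_\la$ I would put $\rho := g\rho_\la g^{\dagger}$; this is a density matrix of type $\la$ whose eigenstate matrix is $g$, so $\Phi(\rho) = U_{\bf n}gU_\la$. Thus $\Phi$ is a bijection between type $\la$ local equivalence classes and $U_{\bf n}\backslash U/U_\la$, and by construction a representative of the double coset $\Phi(\rho)$ is exactly the matrix of orthonormal eigenstates of $\rho$, as asserted. Finally, the subgroup $U_\la$ furnished by Proposition \ref{P:conj} depends only on the multiplicities $\la_1, \cdots, \la_l$ and not on the values $a_i$, so carrying out the same argument with any other admissible spectrum produces the identical double coset space and identifies the two classifications; this yields the independence of the correspondence from the actual eigenvalues.

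I do not expect a deep obstacle: the whole argument is orbit-stabilizer bookkeeping, with $U_\la$ quotiented on the right and $U_{\bf n}$ on the left of $U(n)$. The step that needs genuine care is controlling the ambiguity in the eigenstate matrix $g$. One must first pin down a single diagonal representative $\rho_\la$, together with a fixed assignment of the distinct eigenvalues to the blocks, so that $g$ becomes well defined modulo $U_\la$; and one must distinguish $U_\la$ from its normalizer in $U(n)$, since relabellings of equal-size eigenspaces carrying different eigenvalues lie in the normalizer but not in $U_\la$ itself. Once $\rho_\la$ is fixed these issues disappear and the verifications above go through directly.
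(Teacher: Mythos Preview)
Your proposal is correct and follows essentially the same approach as the paper: fix the diagonal model $\Lambda=\rho_\la$, send $\rho=g\Lambda g^{\dagger}$ to the double coset $U_{\bf n}gU_\la$, and verify that local equivalence corresponds exactly to equality of double cosets via $U_\la=U_\Lambda$. Your write-up is in fact more careful than the paper's, since you explicitly check well-definedness of the map (the ambiguity $\tilde g^{-1}g\in U_\la$), surjectivity, and flag the normalizer issue when blocks have equal size.
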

\begin{proof} Globally equivalent density matrices are
determined by their eigenvalue spectrum and multiplicities. The
class of (locally) equivalent density matrices of type $\la$
consists of
$$g\Lambda g^{\dagger}$$
where
$\Lambda=diag(a_1I_{\la_1}, \cdots, a_lI_{\la_l})$ and $g$ is a
unitary matrix consisting of orthonormal eigenstates of $\rho$. Here
$\Lambda$ is fixed. When two density matrices $\rho_i=g_i\Lambda
g_i^{\dagger}$ are local unitary equivalent, then $\rho_1=k\rho_2
k^{-1}$ for some $k\in U_{\bf n}$. It then follows that
$$g_1\Lambda g_1^{\dagger}=kg_2\Lambda g_2^{\dagger}k^{\dagger}
\Longrightarrow g_1^{-1}kg_2=c\in U_{\Lambda},$$ and $U_{\Lambda}=
U_{\la}=U(\la_1)\times \cdots \times U(\la_l)$ by Proposition
\ref{P:conj}. Therefore $U_{\bf n}g_1 U_{\la}=U_{\bf n}g_2 U_{\la}$.

Conversely, suppose two double cosets $U_{\bf n}g_i U_{\la}$ are the
same. Then $g_1=kg_2c$ for some $k\in U_{\bf n}$ and $c\in U_{\la}$.
It follows that $g_1^{-1}kg_2=c^{-1}\in U_{\lambda}$, thus
$g_1\Lambda g_1^{\dagger}=kg_2\Lambda g_2^{\dagger}k^{\dagger}$. It
is clear that $\rho_1=g_1\Lambda g_1^{\dagger}$ gives rise to a
density matrix and $\rho_2=g_2\Lambda g_2^{\dagger}$ gives rise to a
globally equivalent density matrix (as $\rho_1=\rho_2^k$). Their
corresponding classes (under local equivalence) are also equal:
$$[g_1\Lambda g_1^{\dagger}]=[g_2\Lambda g_2^{\dagger}].$$

\end{proof}

\section{Induced representations}

Let $\la$ be a partition of $n$, $n=\la_1+\cdots+\la_l$, and $\bf n$
be a factorization of $n$, $ n=n_1\cdots n_r$, where $n$ is the
dimension of the underlying (global) Hilbert space. The Young
subgroup $U_{\la}$ associated with the partition $\la$ is the direct
product $U(\la_1)\times \cdots \times U(\la_l)$. From our previous
discussion it is clear that one also needs to consider the
multiplicative Young subgroup $U_{\bf n}=U(n_1)\otimes \cdots
\otimes U(n_r)$ associated to the factorization $\bf n$ of $n$. The
additive Young subgroup $U_{\la}=U(\la_1)\times \cdots \times
U(\la_l)$ can be imbedded into $U(n)$ in the canonical manner:
$$(g_1, \cdots, g_l)\in U(\la_i)\mapsto g_1\times\cdots \times g_l
=\begin{pmatrix} g_1 &  & \\
& \ddots &\\
&  & g_l\end{pmatrix} \in U(n),
$$
while the multiplicative Young subgroup $U_{\bf n}=U(n_1)\otimes
\cdots \otimes U(n_r)$ is imbedded into $U(n)$ via tensor product
$$
g_1\otimes\cdots \otimes g_r\in U_{\bf n} \mapsto g_1\otimes\cdots
\otimes g_r \in U(n) .
$$

 As we have seen in Section \ref{S:equivalence}
 the double cosets given by the Young subgroups $U_{\la}$ and
multiplicative Young subgroup $U_{\bf n}$ are in one-to-one
correspondence of quantum multi-partite systems with dimensions
$n_i$. We now reformulate this correspondence in terms of
representations of the unitary group. We can restrict ourself to the
special unitary group without loss of generality. Let $G=SU(n)$ and
$H=SU_{\la}$, and it is clear that $H$ is a closed subgroup of the
compact Lie group $G$.

Let $\mathbb C$ be the trivial representation of the subgroup $H$.
The induced representation $Ind_{H}^{G}\mathbb C$, as a vector
space, is the space of all complex continuous functions on $H$
satisfying the condition:
$$
f(gh)=h^{-1}f(g), \qquad g\in G, h\in H.
$$
The action of $G$ on the induced representation is given by
\begin{equation}
(g\cdot f)(x)=f(g^{-1}x), \qquad g, x\in G.
\end{equation}
The induced representation $Ind_{H}^{G}\mathbb C$ is isomorphic to
$C^0(G/H, \mathbb C)$, the space of continuous functions on the
cosets $G/H$ \cite{BD}. Here the action is the left translation:
$L(g)f(x)=f(g^{-1}x)$. The canonical basis of the representation is
given by characteristic functions $\phi_{gH}$, indexed by right
 cosets $G/H$. Here
\begin{equation}
\phi_{gH}(g'H)=\begin{cases} 1,& g^{-1}g'\in H\\
0, & g^{-1}g'\notin H
\end{cases}.
\end{equation}
The induced representation $Ind_H^G\mathbb C$ is essentially
equivalent to the set of unitary classes of density matrices as seen
from the following observation.

\begin{example} Suppose $n=1\cdot n$ is the factorization of $n$, then local unitary equivalence
coincides with global unitary equivalence. The double cosets $U_{\bf
n}\backslash U/U_{\la}$ reduces to cosets $U/U_{\la}$. The unitary
equivalence classes (with same set of spectrum) are also in
one-to-one correspondence of the set of partitions $\mathcal P(n)$:
$$n=\la_1+\cdots+\la_l.$$
Each equivalence class is represented by a homogeneous space of the
unitary group $U(n)$ determined by the partition $\la$. The
generating function of the number of unitary equivalence classes is
given by \cite{M}
$$
\sum_{n=0}^{\infty}\#(\mbox{equiv
class})q^n=\prod_{n=1}^{\infty}\frac 1{1-q^n}.
$$
\end{example}

\begin{example} Density matrices on any Hilbert space of prime
dimension are always inseparable. Multipartite states should live on
Hilbert spaces of dimension equal to powers of prime numbers.
\end{example}

Local unitary equivalent classes can also be studied from
representation theoretic viewpoints. Let $K=SU_{\bf n}$
corresponding to the factorization $\bf n$, and $H=SU_{\la}\leq
G=SU(n)$ as above. Both $H$ and $K$ are closed subgroups of $G$. We
consider the restriction $Res_KInd_H^G\mathbb C$. Let $W$ be any
$H$-module, then it follows from Frobenius reciprocity \cite{BD}
that
\begin{align*}
Hom_K( Res_KInd_H^G\mathbb C, W)&\simeq Hom_G(Ind_H^G \mathbb C,
Ind_K^G W)\\
&\simeq Hom_H(\mathbb C, Res_HInd_K^G W).
\end{align*}
In particular, when $W=\mathbb C$, one has the duality
$Res_KInd_H^G\mathbb C\simeq Res_HInd_K^G\mathbb C$. The linear
operators in $Hom_G(Ind_H^G \mathbb C, Ind_K^G \mathbb C)$ in
general are certain distributions on the direct product of the
unitary group by Schwartz's distribution theory \cite{Sc}. If the
intertwining number (the dimension of $Hom_G(Ind_H^G \mathbb C,
Ind_K^G W)$) is finite, then they are equal to the number of double
cosets of $K\backslash G/H$ by Mackey's theorem \cite{Mc}.
 Thus we obtain the following result.

\begin{theorem} Let $H=H_1\otimes \cdots \otimes H_r$ be the global
Hilbert space with $dim(H_i)=n_i$, and $U_{\bf n}$ as above. Then
the classes of globally equivalent multi-partite mixed states under
local equivalence are isomorphic to the restriction of the induced
representation $Ind_{U_{\la}}^{U(n)}\mathbb C$ to the subgroup
$SU_{\bf n}$.
\end{theorem}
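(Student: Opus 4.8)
The plan is to combine the double coset correspondence of Section~\ref{S:equivalence} with Mackey's decomposition of a restricted induced representation, using the fact recalled above that $Ind_{U_{\la}}^{U(n)}\mathbb C\simeq C^0(U(n)/U_{\la},\mathbb C)$ is precisely the space of continuous functions on the single global equivalence class $[\la]\simeq U(n)/U_{\rho}$. Restricting this representation to the local unitary group ought to split $[\la]$ into its local equivalence classes, and the bookkeeping of that split is exactly Mackey's theorem. Note that the induced module carries a fixed $\la$ (through $U_{\la}$), so the statement is the per-spectral-type version.

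First I would reduce from $U$ to $SU$, which the excerpt has already announced is harmless: the kernel of the $U(n)$-action on density matrices is the group of scalars $e^{i\theta}I$ (the Lemma at the start of Section~\ref{S:equivalence}), and likewise the central circles inside the blocks of $U_{\la}$ and of $U_{\bf n}$ act trivially, so the inclusion $SU(n)\hookrightarrow U(n)$ induces a bijection $SU_{\bf n}\backslash SU(n)/SU_{\la}\longrightarrow U_{\bf n}\backslash U(n)/U_{\la}$. By the double coset correspondence this common double coset space is in canonical bijection with the type $\la$ density matrices taken up to local unitary equivalence, that is, with the partition of the one global class $[\la]$ into local classes.

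Next I would linearize the orbit picture. Put $G=SU(n)$, $H=SU_{\la}$, $K=SU_{\bf n}$, all closed subgroups of the compact Lie group $G$. The canonical topological basis of $Ind_{H}^{G}\mathbb C\simeq C^0(G/H,\mathbb C)$ is the family of characteristic functions $\phi_{gH}$ indexed by the cosets $gH\in G/H$, on which $G$ acts by left translation. On restriction to $K$, the cosets $G/H$ break into $K$-orbits, and a $K$-orbit is precisely a double coset $KgH$; hence the $K$-set $\{\phi_{gH}\}$ has orbit decomposition indexed by $K\backslash G/H$. Invoking Mackey's subgroup theorem (applicable here because, as noted just before the statement, the intertwining number is finite, so the double coset space is discrete) yields
\begin{equation*}
Res_{K}\,Ind_{H}^{G}\mathbb C\;\simeq\;\bigoplus_{x\in K\backslash G/H}Ind_{\,K\cap xHx^{-1}}^{\,K}\mathbb C,
\end{equation*}
where the summand attached to the double coset $KxH$ is the transitive permutation module $C^0(K/(K\cap xHx^{-1}),\mathbb C)$ spanned by those $\phi_{gH}$ with $gH\subset KxH$. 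Since the representative $x$ is, by the double coset correspondence, the matrix of orthonormal eigenstates of the density matrix it labels, these indecomposable summands are in canonical one-to-one correspondence with the type $\la$ classes of multipartite mixed states under local equivalence. That is the asserted identification of the set of local classes with $Res_{SU_{\bf n}}Ind_{U_{\la}}^{U(n)}\mathbb C$.

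I expect the main obstacle to be making Mackey's theorem rigorous in this infinite-dimensional topological setting: $G=SU(n)$ is a positive-dimensional compact Lie group, so $Ind_{U_{\la}}^{U(n)}\mathbb C$ is infinite dimensional, ``$Ind$'' must be read as continuous induction, and one needs that $SU_{\bf n}\backslash SU(n)/SU_{\la}$ is discrete --- equivalently, that each double coset is open --- so that the permutation summands $C^0(K/(K\cap xHx^{-1}),\mathbb C)$ are genuine closed $K$-subrepresentations whose topological direct sum exhausts $Res_{K}Ind_{H}^{G}\mathbb C$; the finiteness of the intertwining number flagged before the statement is the input that secures this. A secondary point is to fix the precise meaning of ``isomorphic to'': here it is the natural bijection between the set of local classes and the set of orbit summands of the restricted induced representation (equivalently, between local classes and $K$-orbits of the characteristic functions $\phi_{gH}$), which is exactly what the Mackey decomposition supplies; and one should double-check the $U$-versus-$SU$ reduction, which rests entirely on the scalar-matrix kernel.
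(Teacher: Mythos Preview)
Your approach and the paper's are both Mackey-theoretic and land on the same correspondence, but they differ in emphasis. The paper's argument (the paragraph immediately preceding the theorem) first invokes Frobenius reciprocity to obtain the duality $Res_{K}Ind_{H}^{G}\mathbb C\simeq Res_{H}Ind_{K}^{G}\mathbb C$, and then appeals to Mackey's intertwining-number theorem, which (conditionally on finiteness) identifies $\dim Hom_{G}(Ind_{H}^{G}\mathbb C, Ind_{K}^{G}\mathbb C)$ with the cardinality of $K\backslash G/H$. You bypass Frobenius reciprocity entirely and go straight to Mackey's subgroup theorem, decomposing $Res_{K}Ind_{H}^{G}\mathbb C$ as a sum of transitive permutation modules indexed by $K\backslash G/H$. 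Your route has the advantage of making concrete what ``isomorphic'' should mean in the theorem --- a bijection between local classes and $K$-orbit summands --- whereas the paper leaves this implicit; the paper's route, on the other hand, extracts the $H\leftrightarrow K$ symmetry for free via Frobenius. Both arguments share the same soft spot, which you correctly flag: the discreteness (equivalently, finiteness) of $K\backslash G/H$ is assumed rather than verified, and for generic $\la$ and ${\bf n}$ this double coset space has positive dimension, so neither your direct-sum decomposition nor the paper's finite intertwining number is literally available without a further interpretive step.
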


\section{Homogeneous spaces}

Homogeneous spaces are important non-empty manifolds with a
transitive action of a Lie group. The quotient space $G/H$ are
special examples of homogeneous spaces. If one views equivalent
local unitary classes as points in the homogeneous spaces one may
understand the geometric meaning of the statement that all points
are the same.

Let $\bf n$ be a fixed prime decomposition of $n$ and $\la$ be a
fixed partition of $n$. If two density matrices $\rho_i$ are locally
equivalent, then their corresponding double cosets $U_{\bf
n}g_1U_{\la}$ and $U_{\bf n}g_1U_{\la}$ are the same. Note that in
this description the equivalence relations are determined completely
by the double cosets, and no information about the actual
eigenvalues are needed. In other words, classes of density matrices
with the same eigenvalue distribution can be mapped to classes of
density matrices with the same eigenvalue multiplicities. Therefore
we have proved the following result.

\begin{theorem} Let $H=H_1\otimes \cdots \otimes H_r$ be the global
Hilbert space with $dim(H_i)=n_i$, and $U_{\bf n}$ as above. Then
the class of globally equivalent multi-partite mixed states under
local equivalence can be mapped to another class of globally
equivalent multi-partite mixed states with the same eigenvalue
distribution and multiplicities. The two classes are both
described
 by the homogeneous spaces $U_{\bf n}\backslash U/U_{\la}$, where
 the partition $\la$ is given by the eigenvalue multiplicities.
\end{theorem}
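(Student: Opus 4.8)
The plan is to deduce the statement from the double coset classification obtained in Section~\ref{S:equivalence}, exploiting the fact that the space $U_{\bf n}\backslash U/U_{\la}$ retains only the partition $\la$ of eigenvalue multiplicities and carries no information about the actual eigenvalues. Thus the entire content is that a local equivalence class of density matrices with one admissible spectrum can be transported to a class with any other spectrum having the same multiplicities, with all of the bookkeeping already carried out by the double cosets.

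First I would fix the partition $\la=(\la_1,\dots,\la_l)$ coming from the prescribed eigenvalue multiplicities and the fixed factorization $\bf n$ of $n$, and write $\Lambda=diag(a_1I_{\la_1},\dots,a_lI_{\la_l})$ for a representative of the given class. Choosing any other tuple of distinct positive reals $b_1,\dots,b_l$ with $\sum_i b_i\la_i=1$ (such tuples obviously exist), I set $\Lambda'=diag(b_1I_{\la_1},\dots,b_lI_{\la_l})$. By Proposition~\ref{P:conj} the invariant subgroups agree, $U_{\Lambda}=U_{\Lambda'}=U_{\la}$, since this subgroup depends only on the block sizes $\la_i$ and not on the scalars $a_i$ or $b_i$. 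I would then define the correspondence on representatives by $g\Lambda g^{\dagger}\mapsto g\Lambda' g^{\dagger}$ for $g\in U=U(n)$ and check that it passes to local equivalence classes.

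The verification is immediate from the theorem of Section~\ref{S:equivalence} identifying type-$\la$ local equivalence classes with $U_{\bf n}\backslash U/U_{\la}$: the local equivalence class of $g_1\Lambda g_1^{\dagger}$ is the double coset $U_{\bf n}g_1U_{\la}$, and the local equivalence class of $g_1\Lambda' g_1^{\dagger}$ is that same double coset, since the proof of that theorem used nothing about $\Lambda$ beyond $U_{\Lambda}=U_{\la}$. Hence $g_1\Lambda g_1^{\dagger}$ and $g_2\Lambda g_2^{\dagger}$ are locally equivalent if and only if $U_{\bf n}g_1U_{\la}=U_{\bf n}g_2U_{\la}$, which in turn holds if and only if $g_1\Lambda' g_1^{\dagger}$ and $g_2\Lambda' g_2^{\dagger}$ are locally equivalent; so the map is well defined and injective, and interchanging $\Lambda$ with $\Lambda'$ furnishes its inverse. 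Both sets of classes are therefore in bijection with the single double coset space $U_{\bf n}\backslash U/U_{\la}$, which is the orbit space of the residual left $U_{\bf n}$-action on the homogeneous space $U/U_{\la}$; this is the common geometric description asserted in the statement.

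The one step that genuinely needs care is the well-definedness of the correspondence, namely the claim that replacing $\Lambda$ by $\Lambda'$ neither enlarges nor shrinks the invariant subgroup; everything rests on $U_{\Lambda}$ being governed solely by the multiplicities, so the argument is in the end a careful repackaging of Proposition~\ref{P:conj} together with the earlier double coset theorem. A minor secondary caveat is terminological: $U_{\bf n}\backslash U/U_{\la}$ is a double coset space rather than a bare quotient $G/H$, so "homogeneous space" in the statement should be understood as "quotient of the homogeneous space $U/U_{\la}$ by $U_{\bf n}$", as indicated above.
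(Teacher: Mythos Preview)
Your proposal is correct and follows essentially the same approach as the paper: the paper's argument (the paragraph immediately preceding the theorem) is simply the observation that the double coset description from Section~\ref{S:equivalence} depends only on the partition $\la$ of multiplicities and not on the eigenvalues themselves, so classes with different spectra but identical multiplicities correspond to the same space $U_{\bf n}\backslash U/U_{\la}$. You have spelled this out more carefully by constructing the explicit map $g\Lambda g^{\dagger}\mapsto g\Lambda' g^{\dagger}$ and verifying well-definedness and bijectivity via Proposition~\ref{P:conj}, but the underlying idea is identical.
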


This result shows that the local unitary equivalence relations does
not depend on the actual values of the eigenvalues, and all classes
can be viewed as ``equal'' in geometric sense. Therefore we can
parameterize the classes of density matrices by the set
$$\mathbb
R^n\times (U_{\bf n}\backslash U(n)/U_{\la}), $$
where $\mathbb R$ is
used for the eigenvalue.

\begin{example}
For $e\geq 0$ and $0\leq f\leq 1-e$ we consider the two qubit Werner
state
$$\rho=\left(\begin{array}{llll}
                      \frac{1-e-f}3 & & & \\
                     & \frac{1+2f}6 & \frac{1-4f}6 &  \\
                     & \frac{1-4f}6 & \frac{1+2f}6 &  \\
                  & & &   \frac{1+e-f}3  \\
\end{array}
\right).  $$ When $e=0$, this is the usual Werner state \cite{W}.
The eigenvalues are
$$
(1-f+e)/3, (1-f)/3, (1-f-e)/3, f.
$$
The nontrivial factorization is certainly $4=2\cdot 2$ and the
partition of the eigenvalue multiplicities are $\la=(1111)$ for
$e>0$ and $f\neq \frac{1\pm e}4$. When $e>0$ and $f=\frac{1\pm e}4$
or $1/4$ then $\la=(211)$. If $e=0$, then $\la=(31)$ for $f\neq 1/4$
and $\la=(4)$ for $f=1/4$. From the picture there are essentially
one dense class of local unitary equivalence; two one-dimensional
classes of unitary equivalence; and one degenerate class of unitary
equivalence. The most interesting cases are $\la=(31)$ and $(21^2)$,
where the local unitary classes are classified by double cosets of
$SU_{(22)}\backslash SU(4)/(SU(3)\times SU(1))$ and
$SU_{(22)}\backslash SU(4)/(SU(2)\times SU(1)\times SU(1))$
respectively. All classes with the same partition are viewed as
equal.
\end{example}

\setlength{\unitlength}{1cm}
\begin{picture}(5.5,5.5)(-4.5,-0.25)
\put(0,0){\vector(1,0){4.5}} \put(4.75,-.05){$e$}
\put(0,0){\vector(0,1){4.5}} \put(0,4.7){\makebox(0,0){$f$}}
\put(0,4){\line(1,-1){4}}
\put(0,1){\circle{0.2}}\multiput(0.2,1)(0.3,0){10}{\line(1,0){0.15}}
\put(0.1,1){\line(4,1){2.35}} \put(0.1,1){\line(4,-1){3.9}}
\put(0.1,1){\line(1,0){2.9}}
\put(-0.7,0.85){$(4)$}\put(3.9,-0.5){$1$} \put(-0.4,3.9){$1$}

\put(1.9,1.1){$(1^4)$}\put(1.9,0.6){$(1^4)$}\put(0.5,0.2){$(1^4)$}
\put(0.5,2){$(1^4)$}

\put(4,3){\vector(-2,-3){1.35}} \put(4,3){\vector(-3,-2){2.44}}
\put(4,3){\vector(-1,-2){1.3}}\put(4.1,3.1){$(21^2)$}

\put(-3,1.2){\vector(3,1){3}}
\put(-3,1.2){\vector(3,-1){3}}\put(-3.8,1.2){$(31)$}
\end{picture}

\section{Conclusion}

We have established a one-to-one correspondence between the set of
density matrices of multi-partite systems and the set of double
cosets of additive and multiplicative Young subgroups of unitary
groups. Our results show that the interrelationship among the
classes of local unitary equivalent multi-partite mixed states is
independent from the actual values of the eigenvalues and only
depends on the multiplicities of the eigenvalues. This interesting
phenomenon allows us to look at the local unitary equivalence for
all global unitary density matrices at the same time. The geometry
of the classes are given by that of homogeneous spaces of the
unitary group as well as the invariant theory of classical groups
(cf. \cite{H}). It is expected that the combinatorics of density
matrices and invariant theory will also play a role in this study.

\bigskip
\centerline{ Acknowledgments}

We would like to thank S. M. Fei for stimulating discussion at the
early stage of the work. We are also grateful to the support of NSA
grants and NSFC's Distinguished Youth Grant. Parts of the work was
conceived while the author was visiting the Chern Institute of
Mathematics in Tianjin, China.

\bibliographystyle{amsalpha}

\end{document}